\newcommand{\added}[1]{{#1}}
\newcommand{\addedfragile}[1]{{#1}}
\newtheorem{theorem}{Theorem}
\newtheorem{proposition}[theorem]{Proposition}
\newcommand{\Constantconstant}[2]{\mathrm{#1}_{\mathrm{#2}}}
\newcommand{\Ts}{\Constantconstant T s}
\newcommand{\Tw}{\Constantconstant T w}
\newcommand{\Qw}{\Constantconstant Q w}
\newcommand{\Te}{\Constantconstant T e}
\newcommand{\Tmax}{\Constantconstant T \max}
\newcommand{\Toff}{\mathrm{T}_{\mathrm{off}}}
\lstdefinelanguage{JavaScript}{
  keywords={typeof, new, true, false, catch, function, return, null, catch, switch, var, if, in, while, do, else, case, break},
  keywordstyle=\bfseries,
  ndkeywords={class, export, boolean, throw, implements, import, this},
  ndkeywordstyle=\color{darkgray}\bfseries,
  identifierstyle=\color{black},
  sensitive=false,
  comment=[l]{//},
  morecomment=[s]{/*}{*/},
  commentstyle=\ttfamily,
  stringstyle=\ttfamily,
  morestring=[b]',
  morestring=[b]"
}
\newcommand\copyrighttext{%
  \footnotesize \textcopyright 2015 IEEE. Personal use of this material is permitted.
  Permission from IEEE must be obtained for all other uses, in any current or future 
  media, including reprinting/republishing this material for advertising or promotional 
  purposes, creating new collective works, for resale or redistribution to servers or 
  lists, or reuse of any copyrighted component of this work in other works. 
  DOI: \href{https://dx.doi.org/10.1109/JSYST.2015.2466086}{10.1109/JSYST.2015.2466086}
}
\newcommand\copyrightnotice{%
\begin{tikzpicture}[remember picture,overlay]
\node[anchor=south,yshift=10pt] at (current page.south) {\fbox{\parbox{\dimexpr\textwidth-\fboxsep-\fboxrule\relax}{\copyrighttext}}};
\end{tikzpicture}%
}
\begin{document}

\title{Optimum Traffic Allocation in Bundled Energy Efficient Ethernet Links}

\author{Miguel~Rodríguez-Pérez,~\IEEEmembership{Member,~IEEE}, Manuel
  Fernández-Veiga,~\IEEEmembership{Senior Member,~IEEE,} \\
  Sergio~Herrería-Alonso, \added{Mariem} Hmila and~Cándido~López-García%
  \thanks{The authors are with the Telematics Engineering Dept., Univ.~of
    Vigo, 36$\,$310~Vigo, Spain. Tel.:~+34$\,$986$\,$813459;
    fax:~+34$\,$986$\,$812116; email:~\protect\url{miguel@det.uvigo.es} (M.
    Rodríguez-Pérez).}
  \thanks{%
    \addedfragile{Work supported by the European Regional Development Fund (ERDF) and the
      Galician Regional Government under agreement for funding the Atlantic Research
      Center for Information and Communication Technologies
      (\href{http://atlanttic.uvigo.es/en/}{AtlantTIC}).}%
  }
}

 
\maketitle
\copyrightnotice
\begin{abstract}
  The energy demands of Ethernet links have been an active focus of research
  in the recent years. This work has enabled a new generation of Energy
  Efficient Ethernet (EEE) interfaces able to adapt their power consumption to
  the actual traffic demands, thus yielding significant energy savings. With
  the energy consumption of single network connections being a solved problem,
  in this paper we focus on the energy demands of link aggregates that are
  commonly used to increase the capacity of a network connection. We build on
  known energy models of single EEE links to derive the energy demands of the
  whole aggregate as a function on how the traffic load is spread among its
  powered links. We then provide a practical method to share the load that
  minimizes overall energy consumption with controlled packet delay, and prove
  that it is valid for a wide range of EEE links. Finally, we validate our
  method with both synthetic and real traffic traces captured in Internet
  backbones.
\end{abstract}
 
\begin{IEEEkeywords}
  Network interfaces, Link aggregation, Optimization methods, Energy
  efficiency
\end{IEEEkeywords}

\section{Introduction}
\label{sec:introduction}

\IEEEPARstart{E}{nergy consumption} is nowadays a global source of concern for
both economic and environmental reasons. Networking equipment alone consumes
1.8\% of the world's electricity, and that number is currently increasing at a
10\% rate annually~\cite{lambert12:_world}. If we \added{just focus} on data
centers, between 15 and 20\% of electricity is used for
networking~\cite{heller10:_elast}. These reasons are spurring the development
of more power efficient networking equipment.

A direct result of these efforts is the IEEE~802.3az standard~\cite{802.3az}
which provides a new \emph{idle mode} for Ethernet physical interfaces. This
new mode only needs a small fraction of the power used in normal operation,
but no traffic can be transmitted nor received while the interface stays in
the idle mode. Since there is an implicit trade-off between energy consumption
and frame delay, these new Energy Efficient Ethernet (EEE) interfaces need a
governor that decides when to enter and exit this idle mode. In fact, several
alternatives have already been proposed in the
literature~\cite{gupta07:_using_low_power_modes_for,rodriguez09:_improv_oppor_sleep_algor_lan_switc,reviriego10:_burst_tx_eee,christensen10:_the_road_to_eee}
and have been later validated by both
empirical~\cite{reviriego09:_perf_eval_eee,reviriego11:_initial_evaluat_energ_effic_ether}
and analytic
means~\cite{herreria11:_oppor_ether,herreria11:_power_savin_model_burst_trans,marsan11:_simpl_analy_model_energ_effic_ether,herreria12:_gi_g_model_gb_energ_effic_ether,aksic14:_packet_ether}.
These works have provided us with the tools needed to accurately estimate the
power savings of EEE for any arrival traffic pattern with the more prevalent
idle mode governors and to properly tune them to maximize energy savings.

With the energy consumption problem of single Ethernet links mostly solved we
focus in this paper on the power demands of network connections formed by
multiple EEE links, either by link aggregation~\cite{ieee_std_802.1ax} or some
other proprietary means. Despite the existence of EEE for saving energy in the
individual components of the bundle, the global consumption of an aggregate
may be severely affected by how the incoming traffic is shared among its
powered up links. \added{In fact, the power profiles of the individual EEE links are
not linear, as their energy demands do not grow proportionally to the offered
load. This makes the overall power consumption dependent on the actual traffic
share among the links of the aggregate.
}

\added{The main goal of this paper is to obtain the optimum share of traffic among
  the links of an aggregate from an energy efficiency perspective. As far as
  we know, this is the first paper to tackle this issue. We propose a
  \emph{water-filling} algorithm, where traffic is only transmitted on a given
link if all the previous ones are already being used at their maximum capacity
and show that it is optimum for various relevant traffic arrival patterns.
Additionally, we also propose a practical implementation of the algorithm that
can be applied with minimal computational needs in the firmware of
Ethernet line cards.}

The rest of this paper is organized as follows. \added{We introduce some work
  related to Energy Efficient Ethernet in Section~\ref{sec:related-work}.}
Section~\ref{sec:problem-description} provides a formal description of the
problem at hand. Section~\ref{sec:toff} analyzes the concavity of the cost
function of the main EEE algorithms. Section~\ref{sec:delay-control} details a
practical algorithm to implement water-filling. The results are commented in
Section~\ref{sec:results}. Finally, Section~\ref{sec:conclusions} ends the
paper with our conclusions.

\added{
  \section{Related Work}
  \label{sec:related-work}
  
  There are several areas where energy can be saved in the current Internet
  that were first identified in~\cite{gupta03:_green_of_inter}. The existence
  of spare installed capacity was one of the identified aspects. Several works
  proposed to power off unused links during low load periods concentrating
  traffic on just a few network
  paths~\cite{chiaraviglio12:_minim_isp_networ_energ_cost,addis14:_energ_manag_throug_optim_routin,garroppo11:_energ_aware_routin_based_energ_charac_devic,garroppo13:_does_traff_consol_alway_lead,galan2013:_using}.
  Of all these
  proposals,~\cite{garroppo11:_energ_aware_routin_based_energ_charac_devic,garroppo13:_does_traff_consol_alway_lead}
  also take into consideration aggregated links between two network devices.
  However, all these works focus on long timescales, usually hours, while we
  are interested in much lower timescales, as such, both approaches can be
  seen as complementary. Links (and network paths) can be powered off when the
  long-term traffic load is low enough, while, for the short timescales,
  another approach should be used to reduce the energy usage of those links in
  the aggregate that remain active.

  Another source of inefficiency identified in~\cite{gupta03:_green_of_inter}
  was the physical interfaces of network devices. At that time, physical
  interfaces drew a constant amount of power, regardless of the actual traffic
  load. Preliminary works tried to mitigate this either by adapting the
  transmission speed~\cite{gunaratne08:_reduc_energ_consum_ether_adapt}, with
  lower speeds demanding less power, or by briefly switching off the physical
  interfaces when there is none or very little traffic to
  send~\cite{gupta07:_using_low_power_modes_for,rodriguez09:_improv_oppor_sleep_algor_lan_switc}.
  Finally, the IEEE~802.3az~\cite{802.3az} standard was sanctioned providing a
  new low power mode to physical Ethernet interfaces that could be used when
  there was no need to send traffic. 

  New research then focused on the best way to use this new low power mode.
  The straightforward solution, entering low power mode as soon as all traffic
  has been transmitted, and returning to the normal mode with the first packet
  arrival, called \emph{frame transmission}, was experimentally studied
  in~\cite{reviriego11:_initial_evaluat_energ_effic_ether}. A first analytic
  study appeared in~\cite{larrabeiti11:_towar_gb_ether} for Poisson traffic,
  while another analysis considering arrivals of packets trains to take into
  account burst traffic arrivals was presented
  in~\cite{marsan11:_simpl_analy_model_energ_effic_ether}.

  Another explored possibility to make use of the new power mode consists on
  waiting for the arrival of several packets before returning to active
  mode~\cite{reviriego10:_burst_tx_eee,aksic14:_packet_ether}. This mode,
  known as \emph{packet coalescing} or \emph{burst transmission}, avoids
  unnecessary transitions between the normal and low power modes greatly
  improving the energy savings at the cost of additional delays. There exists
  analytic models for the power savings of burst transmission for both Poisson
  traffic~\cite{herreria11:_power_savin_model_burst_trans} and for its
  delay~\cite{Bolla201416,herreria12:_optim_energ_effic_ether}. A general
  model for general arrival patterns for both frame and burst transmission
  covering both power usage and delay can be found
  in~\cite{herreria12:_gi_g_model_gb_energ_effic_ether}.
 
  New research tries to find innovative ways to govern the use of the low
  power mode, see for instance~\cite{cenedese14:_ether} that exploits traffic
  self-similarity to obtain the best duration of the low power interval, in
  such a way that maximizes energy savings for a given maximum allowable
  additional delay.

}

\section{Problem Description}
\label{sec:problem-description}

In transmission networks, it is customary to bundle several homogeneous links,
i.e., links with similar transmission technology, as a cheap way for scaling
up the aggregate transmission rate between two endpoints. The bundle can be
seen and managed either as a set of independent links or as a unit by the
traffic management algorithms and the upper layer protocols. In the latter
case, the traffic is split among the individual links in the bundle
considering the optimization of a given performance metric. We focus in this
paper on the optimum allocation of traffic when the bundle components are
Energy Efficient Ethernet (EEE) links (IEEE 802.3az~\cite{802.3az}), from the
point of view of total energy consumption minimization. The profile of energy
consumption in EEE links has been analyzed in many
works~\cite{reviriego09:_perf_eval_eee,reviriego11:_initial_evaluat_energ_effic_ether,herreria11:_oppor_ether,herreria11:_power_savin_model_burst_trans,marsan11:_simpl_analy_model_energ_effic_ether,herreria12:_gi_g_model_gb_energ_effic_ether,Vitturi2015228,Bolla201416},
and has been shown to be highly sensitive to the statistical variability of
the incoming traffic. Thus, further gains in energy efficiency may be realized
if the total traffic load offered to the bundle is properly allocated to
individual links.

We consider a bundle comprising $N$ identical transmission links. The traffic
demand to the bundle is $X$, and $E(x_i)$ is the energy consumption of link
$i = 1, \dots, N$, where $x_i$ stands for the traffic rate in that link. Link
capacities are denoted by $C_i$, for $i = 1, \dots, N$.. 

Our goal is to minimize the overall consumption of the bundle $E\added{_B}(x_1, \dots,
x_N) = \sum_{i = 1}^N E(x_i)$, that is
\begin{equation}
  \label{eq:problem}
  \min \,E\added{_B}(x_1, \dots, x_N)
\end{equation}
such that
\begin{align}
  C \geq x_i &\geq 0, \text{ and }\\
  \sum_{i = 1}^N x_i &= X
\end{align}
where 
\begin{equation}
  \label{eq:link-cost}
  E(x_i) = 1 - (1 - \sigma_{\text{off}})(1 - \rho_i)
  \frac{\mathrm{T}_{\mathrm{Toff}}(\rho_i)}{\mathrm{T}_{\mathrm{Toff}}(\rho_i) + \mathrm{T}_{\mathrm{s}} + \mathrm{T}_{\mathrm{w}}}
\end{equation}
is the normalized energy consumption of link $i$, as shown
in~\cite{herreria12:_gi_g_model_gb_energ_effic_ether}.
In~\eqref{eq:link-cost}, $\Ts$ and $\Tw$ are constant and account for,
respectively, the transition times needed to enter and exit the idle mode
defined in IEEE~802.\added{3}az. $\Toff(\rho_i)$ is the average time spent by the
interface in the idle state for a given input load. Note that $\Toff(\rho_i)$
depends on both the actual traffic arrival pattern and the idle state
governor. Finally, $\sigma_{\text{off}}$ is simply the fraction of energy
consumed by the interface in the idle state compared to its energy consumption
in the active state and $\rho_i = x_i / C$ is the normalized traffic load on
the \added{link. So}, \eqref{eq:problem} is a standard minimization problem
amenable to analysis provided that $E\added{_B}(x_1, \dots, x_N)$ is a well-behaved
function.

\subsection{Optimum allocation}
\label{sec:optimal-allocation}

In this Subsection, we prove that for certain functions $\Toff(\cdot)$ the
solution to the optimum allocation is a simple sequential water-filling
algorithm: each link capacity is fully used before sending traffic through a
new, idle link. Clearly, \eqref{eq:problem} is a concave separable
optimization problem when the objective function is concave and we have the
following simple result.
\begin{proposition}
\label{propo:waterfill}
\added{ If $E(\cdot)$ is a strictly concave function and
  $\{C_1, C_2, \ldots, C_N\}$ with $C_i \ge C_{i+1}$ are the link capacities,
  then 
$E\added{_B}(x_1,\ldots, x_N)$ is minimum if $x_i = \min\{C_i,
X-\sum_{j<i}x_j\}$.
}
\end{proposition}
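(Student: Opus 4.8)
The plan is to exploit the fact that we are minimizing a \emph{concave} separable function over a \emph{polytope}, namely the transportation-type polytope $P=\{(x_1,\ldots,x_N): 0\le x_i\le C_i,\ \sum_i x_i=X\}$. A concave function on a compact convex polytope attains its minimum at a vertex (extreme point), so the first step is to characterize the vertices of $P$. A standard argument shows that at any vertex at most one coordinate can be strictly between its bounds $0$ and $C_i$: if two coordinates $x_j,x_k$ were both in the open interval, one could perturb by $(x_j+\varepsilon, x_k-\varepsilon)$ and $(x_j-\varepsilon,x_k+\varepsilon)$ to exhibit the point as a midpoint of two other feasible points, contradicting extremality. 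Hence every vertex is "saturated": each $x_i$ equals $0$ or $C_i$, except possibly one fractional coordinate forced by the equality constraint $\sum x_i = X$.

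The second step is to compare these vertices and identify the optimal one. Because $E(\cdot)$ is strictly concave with $E(0)$ and $E(C)$ as its endpoint values, and — reading off \eqref{eq:link-cost} — $E$ increases from $E(0)=\sigma_{\text{off}}$-type idle level up toward $E(C)=1$ (full load, no idle time), concavity gives $E(x)\ge \text{linear interpolation}$, so \emph{spreading} load across many links costs strictly more than \emph{concentrating} it: for a fixed total, moving mass from a half-full link onto one that is already full (pushing it to $C$) or onto the single fractional link strictly decreases $\sum_i E(x_i)$. Formally, among the saturated vertices, take any two links $i<j$ carrying $x_i<C_i$ and $x_j>0$; the transfer $x_i\mapsto x_i+\delta$, $x_j\mapsto x_j-\delta$ with $\delta=\min\{C_i-x_i, x_j\}$ is feasible and, by strict concavity applied to the pair, strictly lowers $E_B$ unless it merely relabels which single link is fractional. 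Iterating this exchange drives every vertex toward the unique one in which the links are filled greedily in index order — precisely $x_i=\min\{C_i, X-\sum_{j<i}x_j\}$ — so that is the unique minimizer.

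For the exchange-argument bookkeeping it is convenient to invoke the elementary fact that for a strictly concave $f$ and $a\le a'<b'\le b$ with $a+b=a'+b'$ one has $f(a)+f(b)<f(a')+f(b')$; applying this with $(a,b)=(x_i+\delta,\,x_j-\delta)$ and $(a',b')=(x_i,x_j)$ when $x_i\le x_j$ (relabel otherwise) yields the strict decrease. The ordering hypothesis $C_i\ge C_{i+1}$ is what makes the greedy "fill link $1$, then link $2$, \ldots" assignment the one that the exchanges converge to, and strictness of the concavity is what upgrades "a minimizer exists at a vertex" to "this vertex is the \emph{unique} minimizer."

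The main obstacle is not any single hard estimate but making the exchange/relabeling argument airtight: one must argue that the greedy vertex is reached in finitely many transfers and that the only vertices not strictly improved are those that coincide with it up to the identity of the lone fractional coordinate — and then use $C_i\ge C_{i+1}$ together with strict concavity to break that last tie in favor of the stated formula. Everything else (compactness of $P$, concave minimum at a vertex, the two-point concavity inequality) is routine.
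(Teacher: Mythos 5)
Your argument follows essentially the same route as the paper's: the concave objective over the polytope attains its minimum at an extreme point (at most one coordinate strictly between its bounds), and a pairwise transfer argument based on a two-point concavity inequality shows that shifting load toward the lower-index, larger-capacity link never increases $E_B$, so iterating the exchange reaches the greedy water-filling vector. The paper does exactly this, using partial shifts $\delta < \min\{x_i^*, C_j-x_j^*\}$ and the chord-slope form of the concavity inequality, whereas you use full transfers $\delta=\min\{C_i-x_i,\,x_j\}$ between vertices; the substance is the same.

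Two corrections are needed. First, your instantiation of the two-point lemma has the pairs swapped: after the transfer the \emph{new} pair is the outer (more spread) one, since $x_j-\delta \le x_i \le x_j \le x_i+\delta$ (this step uses $x_j \le C_j \le C_i$, i.e.\ the capacity ordering), so you must take $(a,b)=(x_j-\delta,\,x_i+\delta)$ and $(a',b')=(x_i,x_j)$, not the assignment you wrote, for which $a\le a'$ fails. Second, the uniqueness claim should be dropped: it is false in general. If $X \le C_2$, putting all traffic on link~2 gives exactly the same cost as the greedy vertex, and with equal capacities every permutation of the greedy allocation is optimal (the paper's own footnote notes this). Correspondingly, your dichotomy ``strict decrease unless the fractional coordinate is relabelled'' also fails at such ties, e.g.\ the transfer $(0,C_j)\mapsto(C_j,0)$ with $C_i>C_j$ leaves the cost unchanged without relabelling a fractional coordinate. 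Neither issue harms the proposition, which only asserts that the greedy allocation \emph{is} a minimizer: with weak inequalities your exchange argument, like the paper's, delivers exactly that, but as written your proof of uniqueness and strictness would not stand.
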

\begin{proof}
  The proof is a direct consequence of the subadditivity of $E(\cdot)$ and is
  given in \added{Appendix}~\ref{sec:proof}.
\end{proof}

Now we derive sufficient conditions for the concavity of the cost function
$E(\cdot)$. Recall from~\eqref{eq:link-cost} that $E(\cdot)$ depends on some
constants related to the interface hardware and the statistical variability of
the incoming traffic. We will try to understand what conditions \added{must
  satisfy $\Toff$}, which is the only traffic-dependent term. For clarity and
simplicity, in the following we use the notations $f(\rho) = \Toff(\rho)$ and
$t(\rho) = E(\rho)$. We will further assume that $f(\rho)$ is
decreasing\footnote{$f(\rho)$ computes the average time spent by the interface
  in the idle state, so it is reasonable to assume it is decreasing when the
  traffic load is higher.} and continuously differentiable in
$\rho \in (0, 1)$.
\begin{proposition}
  \label{prop:convex}
  Let $f(x)$ be a function $f: [0, 1] \to \mathbb{R}^{+}$, decreasing and with
  continuous derivatives. Let $a, b > 0$ and consider the function
  \begin{equation}
    t(x) = 1 - a(1- x) \frac{f(x)}{f(x) + b}.
  \end{equation}
  Under these definitions, $t(x)$ is concave if
  \begin{equation}
    \label{eq:convex-condition}
    f^{\prime\prime}(x) \bigl( f(x) + b \bigr) \geq 2 \bigl( f^\prime(x) \bigr)^2.
  \end{equation}
\end{proposition}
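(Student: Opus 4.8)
The plan is to establish concavity directly by showing $t''(x)\le 0$ for all $x\in[0,1]$. The first step is a notational reduction that isolates the only nonlinear ingredient: set $g(x)=f(x)/(f(x)+b)$, so that $t(x)=1-a(1-x)g(x)$. Differentiating twice, the linear factor $1-x$ produces a cancellation and one is left with
\begin{equation}
  t''(x)=a\bigl[\,2g'(x)-(1-x)g''(x)\,\bigr],
\end{equation}
so the whole question reduces to controlling the signs of $g'$ and $g''$ on $[0,1]$.

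Next I would differentiate $g$ with the quotient rule. In $g'$ the $f f'$ terms cancel, leaving the compact form $g'(x)=b\,f'(x)/\bigl(f(x)+b\bigr)^2$; since $f$ is decreasing we have $f'(x)\le 0$, and as $b>0$ this gives $g'(x)\le 0$. A second differentiation yields
\begin{equation}
  g''(x)=\frac{b\,\bigl[\,f''(x)\bigl(f(x)+b\bigr)-2\bigl(f'(x)\bigr)^2\,\bigr]}{\bigl(f(x)+b\bigr)^3},
\end{equation}
and because $f(x)\ge 0$ and $b>0$ the denominator is strictly positive, so $g''(x)$ has the same sign as the bracket in its numerator. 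Hypothesis~\eqref{eq:convex-condition} says exactly that this bracket is non-negative, i.e.\ that $g$ is convex on $[0,1]$.

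Finally I would combine the pieces: on $[0,1]$ we have $1-x\ge 0$, so under~\eqref{eq:convex-condition} both summands $2g'(x)$ and $-(1-x)g''(x)$ are non-positive, hence $t''(x)\le 0$ and $t$ is concave. The argument is entirely computational, so there is no conceptual obstacle; the only points needing care are carrying out the two quotient-rule differentiations without sign errors and noticing that the substitution $g=f/(f+b)$ is precisely what makes the expression $f''(f+b)-2(f')^2$ surface by itself. If one additionally wants the strict concavity used by Proposition~\ref{propo:waterfill}, it suffices to observe that $t''$ vanishes only where $f'$ does, so $f$ being strictly decreasing prevents $t'$ from being constant on any subinterval.
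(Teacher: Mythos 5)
Your proof is correct and follows essentially the same route as the paper: the same decomposition into the affine factor $a(1-x)$ and the ratio $f/(f+b)$ (your $g$ is the paper's $h$), the same quotient-rule computations showing $g'\le 0$ and that the sign of $g''$ is governed by $f''(f+b)-2(f')^2$, and the same sign combination to conclude $t''\le 0$; the paper merely phrases it as convexity of $u=1-t$ rather than computing $t''$ directly. Your closing remark on strictness is a reasonable complement, since the paper asserts strict concavity while stating the hypothesis with a non-strict inequality.
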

\begin{proof}
  The proof is provided in \added{Appendix}~\ref{sec:proof-proposition-convex}.
\end{proof}

Proposition~\ref{prop:convex} applies trivially to the function $E(\cdot)$
setting $a = 1 - \sigma_{\text{off}}$ and $b = \Ts + \Tw$, \added{so} we have
derived a simple sufficient condition for the $\Toff(\cdot)$ term that makes
$E(\cdot)$ concave and the optimization problem easily solvable.

\section{Analysis of Frame and Burst Transmission}
\label{sec:toff}

In this Section we check whether the known formulas for the average sleeping
time in EEE satisfy the condition of Proposition~\ref{prop:convex}. According
to~\cite{herreria12:_gi_g_model_gb_energ_effic_ether} the time $\Toff(\cdot)$
depends both on the incoming traffic characteristics and the threshold
algorithm used to switch between the idle and the active states in the
Ethernet interface. There are two main approaches, the \emph{frame
  transmission} algorithm and the \emph{burst transmission} one, that we
consider next.

\subsection{Frame Transmission}

Frame transmission is a straightforward use of the idle mode. Under frame
transmission, the physical interface is put in idle mode as soon as the last
frame in the queue has been transmitted, and normal operation is restored as
soon as new traffic arrives at the networking interface. For many common
traffic patterns this operating mode does not produce great energy savings, as
there is a transition period every time the interface changes its operating
mode that draws some energy.
From~\cite{herreria12:_gi_g_model_gb_energ_effic_ether}, for the frame
transmission algorithm
\begin{equation}
  \label{eq:frame-t}
  \Toff^{\text{frame}}(\rho) = \int_{\Ts}^\infty (t - \Ts)
  f_{\rho,\Te}(t) \textrm{d}t,
\end{equation}
where $f_{\rho,\Te}(t)$ denotes the probability density function for traffic
load $\rho$ of the empty period, i.e., the time elapsed since the queue
empties until the subsequent first arrival. When $f_{\rho,\Te}(t)$ is unknown,
\added{from~\cite{herreria12:_gi_g_model_gb_energ_effic_ether}}
equation~\eqref{eq:frame-t} can be approximated by
\begin{equation}
  \label{eq:frame-t-approx}
   \Toff^{\text{frame}}(\rho) \approx \left( \frac{1}{\mu \rho} -
     \Ts \right)^+,
\end{equation}
with $\mu^{-1}$ the average packet transmission duration. Closed formulas
exist when the arrival process follows a Poisson or a deterministic
distribution. In particular, for Poisson arrivals, we have
\begin{equation}
  \Toff^{\text{frame}}(\rho) = \frac{\textrm{e}^{-\mu \rho
      \Ts}}{\mu \rho}.
\end{equation}

\subsubsection{Poisson traffic}
\label{sec:poisson-traffic}

For proving the concavity under the assumption of Poisson arrivals, we start
by noting that $f(\rho) = \textrm{e}^{-\mu\rho \Ts}/ (\mu\rho)$ and substitute
this in~\eqref{eq:convex-condition} with $b = \Ts + \Tw$. The result is the
condition
\begin{equation}
  \begin{split}
    \left( \frac{\mu \Ts^2 \textrm{e}^{-\mu\rho - \Ts}}{\rho} +
      \frac{2 \textrm{e}^{-\mu\rho - \Ts}}{\mu\rho^3} + \frac{2
        \Ts \textrm{e}^{-\mu\rho \Ts}}{\rho^2} \right) \cdot \\
    \left( \frac{\textrm{e}^{-\mu\rho \Ts}}{\mu\rho} + \Ts + \Tw
    \right) > -2 \left( \frac{\textrm{e}^{-\mu\rho \Ts}}{\mu\rho^2} +
      \frac{\Ts \mathrm{e}^{-\mu\rho \Ts}}{\rho} \right)^2,
  \end{split}
\end{equation}
and after some routine simplifications this reduces to
\begin{equation}
  (\Ts + \Tw)
  \mathrm{e}^{\mu\rho \Ts} \bigl(2 + \mu \rho \Ts (2 + \mu\rho
  \Ts) \bigr) > \Ts (2 + \mu\rho \Ts).
\end{equation}
But $\mu\rho \Ts > 0$ and $\mathrm{e}^{\mu\rho \Ts} > 1$, so
\begin{equation}
  \begin{split}
    (\Ts + \Tw) \mathrm{e}^{\mu\rho \Ts} \bigl(2 + \mu \rho
    \Ts (2 + \mu\rho \Ts) \bigr) > \\
    \Ts \mathrm{e}^{\mu\rho\Ts} \bigl(2 + \mu \rho \Ts (2 + \mu\rho \Ts)
    \bigr) > \Ts (2 + \mu\rho \Ts),
  \end{split}
\end{equation}
and~\eqref{eq:convex-condition} is satisfied.

Note that it is important to ascertain that the link consumption function
$E(\cdot)$ is concave for Poisson traffic since, notwithstanding that
Poissonian models are not generally suitable, they are reasonably valid for
real traffic in sub-second timescales~\cite{Karagiannis04} and also for
aggregated traffic in the Internet core~\cite{Vishwanath09}. In any case, in
Section~\ref{sec:results} we test the validity of our assumptions with both
synthetic and real traffic traces collected in Internet links.

Figure~\ref{fig:numerical} shows, for purposes of illustration, a contour plot
of $h^{\prime\prime}(\rho)$ for the function
$h(\rho) = \frac{\Toff^{\text{frame}}(\rho)}{(\Toff^{\text{frame}}(\rho) + \Ts
  + \Tw)}$.
The traffic is Poissonian and the Ethernet link runs at $10$ Gb/s
($b = \Ts + \Tw = 2.28\text{ $\mu$s} + 4.48\text{ $\mu$s}$, as mandated by the
IEEE 802.3az standard~\cite{802.3az}), with packet sizes between $64$ and
$9000$ bytes.\added{\footnote{\addedfragile{Although the maximum Ethernet
      capacity is limited to 1500 bytes, we have tested greater packet sizes
      to account for so called Jumbo-frames.}}} It can be seen that
$h^{\prime\prime}(\rho) \ge 0$ in the region of interest, thus $h(\rho)$ is
convex and
$E(x) = 1 - (1-\sigma_{\mathrm{off}})(1-\rho)h(\rho),\, 0 \le \rho,
\sigma_{\mathrm{off}} < 1$ is concave.

\begin{figure}
  \centering
  \includegraphics[width=0.5\textwidth]{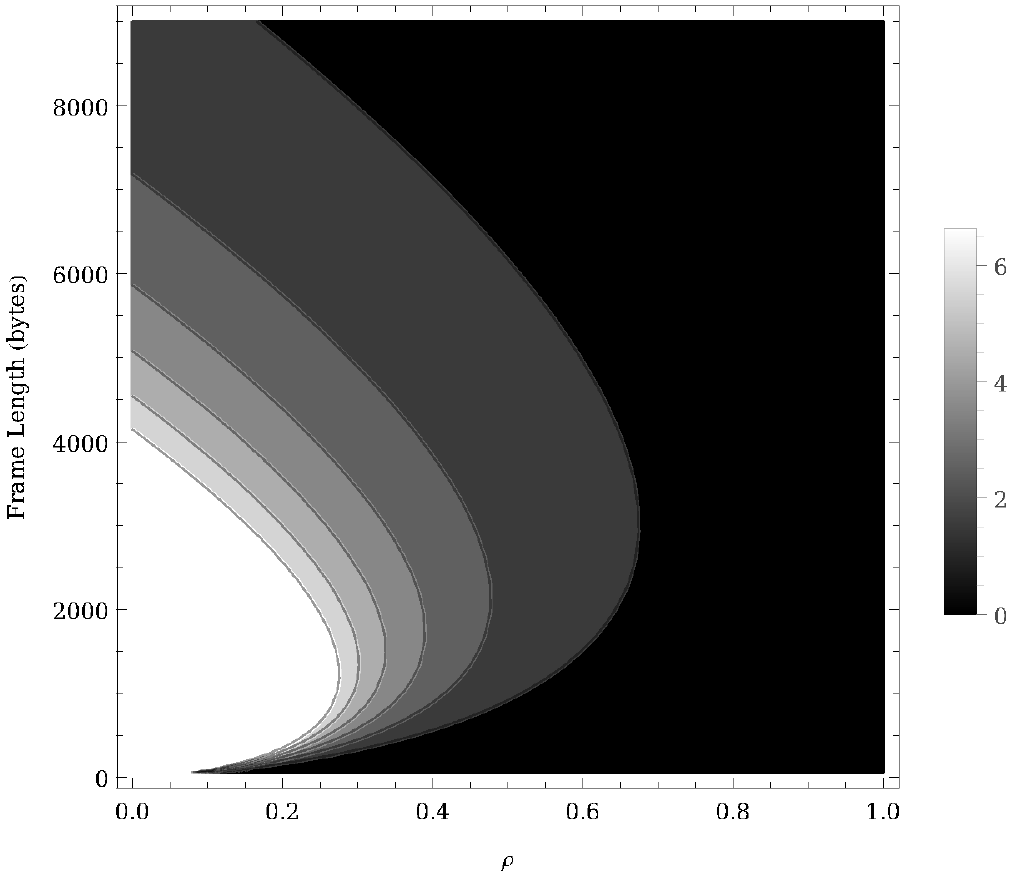}
  \caption{\label{fig:numerical} Contour plot of $h^{\prime\prime}(\rho)$ when
    $h(\rho) =
    \frac{\Toff^{\text{frame}}(\rho)}{\Toff^{\text{frame}}(\rho) +
      \Ts + \Tw}$ for a Poisson arrival process under the frame
    transmission energy-saving algorithm.}
\end{figure}

\subsubsection{General traffic distributions}
\label{sec:gener-traff-distr}

For unknown traffic distributions we must resort to the approximation given
by~\eqref{eq:frame-t-approx}, so we let $f(\rho) = 1/(\mu \rho) - \Ts$ and
$b = \Ts + \Tw$. Now we can immediately substitute in
$f^{\prime\prime}(\rho) (f(\rho) + b) > 2( f^\prime(\rho))^2$ and get
\begin{equation}
  \frac{2 \bigl( \frac{1}{\mu\rho} + \Tw \bigr)}{\mu \rho^3} >
  \frac{2}{\mu^2 \rho^4}.
\end{equation}
After some straightforward cancellations, this is 
\begin{equation}
  \frac{2 \Tw}{\mu \rho^3} > 0,
\end{equation}
which is obviously true.

\subsection{Burst Transmission}

Burst transmission is a simple modification of frame transmission that waits
until a given number of packets $\Qw$ arrive at the network interface before
exiting idle mode. To avoid excessive delays, there is a tunable parameter
$\Tmax$ that limits the wait for the $\Qw$-th frame since the first frame
arrives. The analysis of the burst transmission algorithm is more involved,
for the reason that there is not one but two operating regimes depending on
the traffic load. Fortunately,
\added{\cite{herreria12:_gi_g_model_gb_energ_effic_ether} shows that} the two
operating regimes (low and high traffic load, respectively) can be neatly
separated by the approximate traffic threshold
\begin{equation}
  \rho^\ast \approx \frac{\Qw  - 1}{\mu \Tmax},
\end{equation}
where $\Qw$ and $\Tmax$ are the tunable parameters in the burst transmission
algorithm~\cite{herreria11:_power_savin_model_burst_trans}. As in the previous
Section, we will proceed and check whether, with burst transmission, the link
energy consumption function is concave.

\subsubsection{Low load regime, $\rho < \rho^\ast$}
\label{sec:low-load-regime}

When the traffic load is low, the interface exits the low power mode before a
backlog of $\Qw$ packets accumulates at the queue due to the timer expiry
after waiting for $\Tmax$ seconds. The exact expression for the expected
sojourn time in the low-power state is
(see~\cite{herreria12:_gi_g_model_gb_energ_effic_ether})
\begin{equation}
  \label{eq:burst-t}
  \Toff^{\text{burst, low}}(\rho) = \int_0^\infty (t + \Tmax -
  \Ts) f_{\rho,\Te}(t) \textrm{d}t.
\end{equation}
When $f_{\rho,T_{\text{e}}}(t)$ is unknown, \added{according
  to~\cite{herreria12:_gi_g_model_gb_energ_effic_ether}},~\eqref{eq:burst-t}
can be approximated by
\begin{equation}
  \label{eq:burst-t-approx}
   \Toff^{\text{burst, low}}(\rho) \approx \frac{1}{\mu \rho} +
   \Tmax - \Ts.
\end{equation}
As in the frame transmission algorithm, there exist closed expressions for $
\Toff^{\text{burst, low}}(\rho)$ for some distributions, and
remarkably~\eqref{eq:burst-t-approx} is exact with Poissonian arrivals.

Proving the concavity of $E(\rho)$ in this case is direct. First, note that
$f(\rho) = \Toff^{\text{burst, low}}(\rho) = \Toff^{\text{frame}}(\rho) +
\Tmax$, so that the derivatives $f^\prime$ and $f^{\prime\prime}$ are the same
as in the frame transmission case, and hence
plugging~\eqref{eq:burst-t-approx} into the condition $f^{\prime\prime}(x)
(f(x) + b) > 2 (f^\prime(x))^2$ one can easily check that the inequality
holds.

\begin{figure*}
  \centering
  \subfigure[Low load regime]{
    \includegraphics[width=\columnwidth]{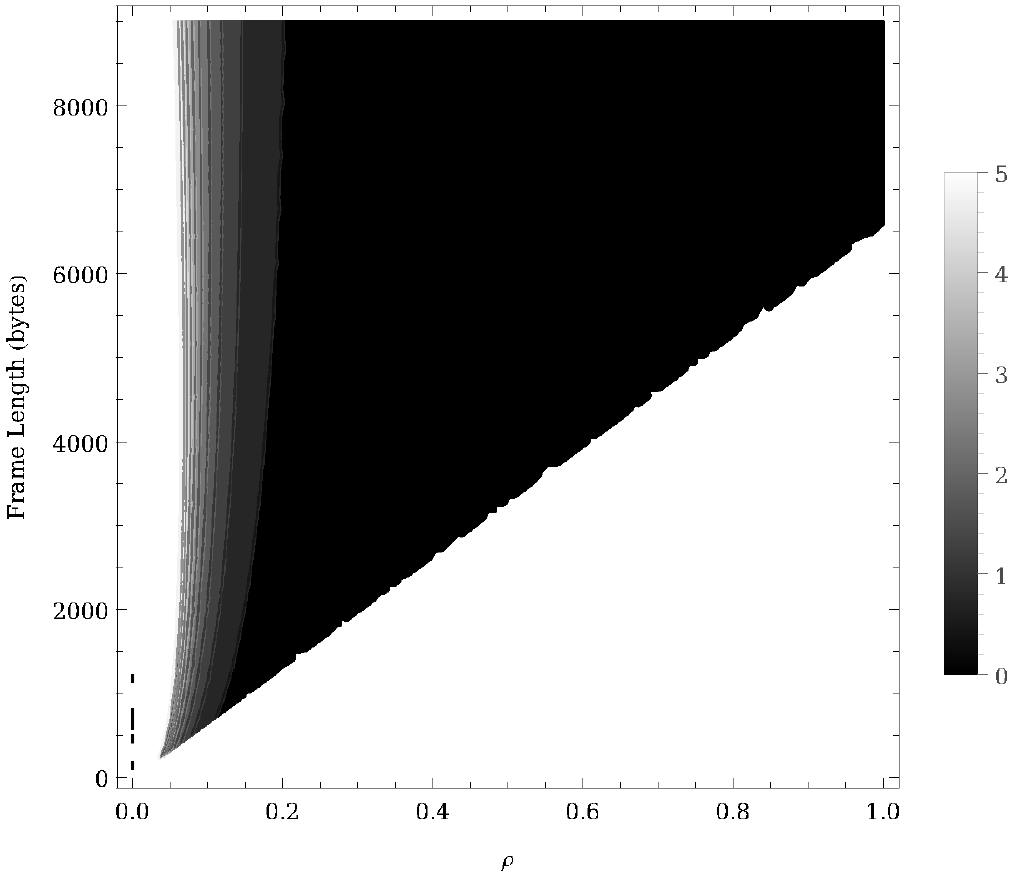}}
  \subfigure[High-load regime]{
    \includegraphics[width=\columnwidth]{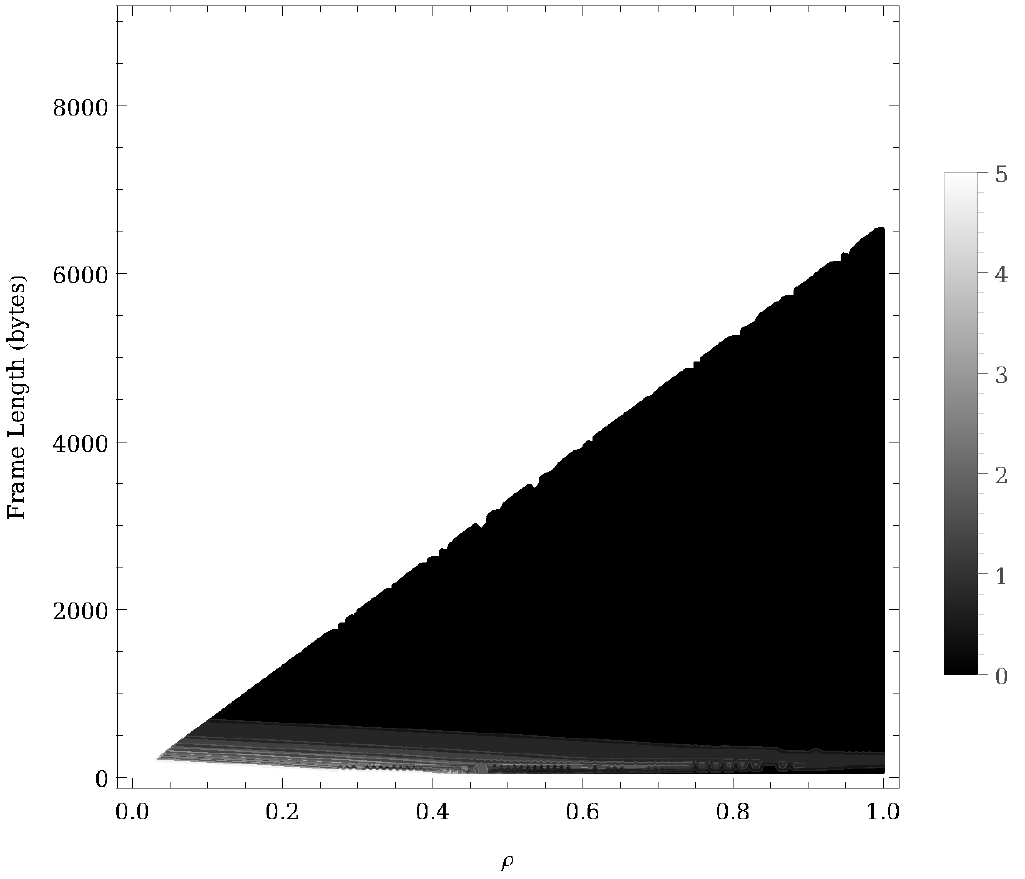}}
  \caption{\label{fig:contour} Contour plots of $h^{\prime\prime}(\rho)$ when
    $h(\rho) = \frac{\Toff^{\text{burst}}(\rho)}{\Toff^{\text{burst}}(\rho) +
      \Ts + \Tw}$ for a Poisson arrival process under the burst transmission
    energy-saving algorithm.}
\end{figure*}

\subsubsection{High load regime, $\rho > \rho^\ast$}
\label{sec:high-load-regime}

When the traffic load is high, the packet burst is much more likely to reach
its maximum size $\Qw$ before the timer expires. Now, the expected sojourn
time in the low-power state is given by
\begin{equation}
  \Toff^{\text{burst,high}}(\rho) = \int_{\Ts}^\infty (t -
  \Ts) f_{\rho,\Qw}(t) \textrm{d}t,
\end{equation}
where, as usual, $f_{\rho,\Qw}(t)$ is the probability density function of the
$\Qw$-th frame arrival epoch after the interface has entered the idle mode.
When the density is unknown, \added{according
  to~\cite{herreria12:_gi_g_model_gb_energ_effic_ether}} the expected time can
be well approximated by
\begin{equation}
  \label{eq:burst-h-approx}
  \Toff^{\text{burst,high}}(\rho) \approx \frac{\Qw}{\mu\rho} -
  \Ts,
\end{equation}
whereas the exact formula for the case of Poissonian arrivals is
\begin{equation}
  \label{eq:burst-h-poisson}
  \Toff^{\text{burst,high}}(\rho) = \frac{\Gamma(\Qw + 1, \mu\rho
    \Ts) - \mu\rho \Ts \Gamma(\Qw,\mu\rho
    T_s)}{\mu\rho\Gamma(\Qw)}.
\end{equation}
Here, $\Gamma(\cdot)$ and $\Gamma(\cdot, \cdot)$ are the complete and
incomplete Gamma functions~\cite{Abramowitz72}, respectively.

In order to prove that Poissonian arrivals lead to concave energy consumption
functions, simply substitute~\eqref{eq:burst-h-poisson}
into~\eqref{eq:convex-condition} to obtain after some straightforward
calculations the inequality
\begin{equation}
  \begin{split}
    \mu\rho \Gamma(\Qw)^2 \textrm{e}^{\mu\rho \Ts} 
    \Bigl(
      \Ts (\mu\rho \Ts)^{\Qw} \bigl( 
        (\Qw - \mu\rho
        \Ts) \Gamma(\Qw, \Ts) + \\
        \mu\rho (\Ts + \Tw) \Gamma(\Qw) 
      \bigr) 
      +\\
      2\textrm{e}^{\mu\rho \Ts}
      \Gamma(\Qw + 1, \mu\rho \Ts) \bigl( 
        (\Ts + \Tw) \Gamma(\Qw) -\\
        \Ts \Gamma(\Qw, \mu\rho \Ts)
      \bigr) 
    \Bigr) > 0.
  \end{split} 
\end{equation}
All the constant terms appearing in the above inequality are positive, so
this simplifies somewhat to
\begin{equation}
  \label{eq:churro}
  \begin{split}
  \Ts (\mu\rho \Ts)^{\Qw} \Bigl( \mu\rho \bigl(
  (\Ts + \Tw) \Gamma(\Qw) -
  \Ts \Gamma(\Qw,\mu\rho \Ts) \bigr) + \\
  \Gamma(\Qw + 1, \mu\rho \Ts)
  \Bigr) + \\
  \Gamma(\Qw + 1, \Ts \mu \rho) \bigl( (\Ts +
  \Tw) \Gamma(\Qw) - 
  \Ts \Gamma(\Qw, T_s \mu\rho)
  \bigr) > 0
\end{split}
\end{equation}
which holds true because
\begin{equation}
  \mu\rho (\Ts + \Tw) \Gamma(\Qw) > \mu \rho \Ts
  \Gamma(\Qw) > \mu\rho \Ts \Gamma(\Qw, \mu\rho \Ts)
\end{equation}
as a consequence of elementary properties of the Gamma functions. This implies
that all the summands in the left side of~\eqref{eq:churro} are positive,
and~\eqref{eq:convex-condition} is satisfied.

The last step is to prove concavity for the general
approximation~\eqref{eq:burst-h-approx}. A change of variable $m = \Qw / \rho$
transforms~\eqref{eq:burst-h-approx} into~\eqref{eq:frame-t-approx} formally.
Since $m > 0$, following the same steps as in frame transmission, one
concludes that~\eqref{eq:burst-h-approx} also fulfills
condition~\eqref{eq:convex-condition}. Hence, the link energy consumption
function $E(\cdot)$ is concave with burst transmission in the high-load
regime, regardless the traffic arrival pattern.

A numerical illustration of the concavity is shown in
\added{Fig.}~\ref{fig:contour}, which depicts the contour plots of
$h^{\prime\prime}(\rho)$ for a $10$ Gb/s Ethernet link as the traffic load and
the packet size vary.

\section{Delay Control}
\label{sec:delay-control}

According to the previous sections, a straightforward application of a
water-filling algorithm to share traffic among the bundle links provides
maximum energy savings. However, if proper care is not taken, packet delay can
grow uncontrolled as we explain next.

From a practical point of view there are many ways to implement a water
filling algorithm. For instance, one could use separate queues for each link
and only divert traffic to new links when the queue of the previous one
overflows. Obviously, this approach exhibits the greatest delay. A second
option is to limit the load factor in every link, and thus the delay, and
divert traffic when this \added{threshold} is reached. Its main drawback is
that no link is used at its full capacity and so the energy savings are not
maximum. Another option, in the opposite extreme, is to have a common bundle
queue and zero-length queues at the links. In this case, a new link is used if
when a packet arrives, the previous link is busy transmitting a packet. The
problem is that if the traffic load is not high enough, we will find that the
first link is idle while the second one is transmitting, and that goes against
the idea of the water-filling algorithm.

We propose a simple dynamic water-filling algorithm that can control average
delay, while keeping the utilization factor of the links close to 1. The
algorithm has one configuration parameter, the expected delay ($d_e$) and
$N+1$ state variables, with $N$ the number of links in the bundle, as it just
keeps a record of the short term average delay ($d_{av}$), calculated with an
exponentially weighted moving average, and the current queue length in each
link measured in time units ($q_i,\,i = 1\ldots N$). \added{More precisely, when a
new packet is about to get queued in queue $i$, the current average delay value
$d_{av}^{\prime}$ is updated as
\begin{equation}
  \label{eq:d-av-exp}
  d_{av}^{\prime} = \beta q_i + (1-\beta) d_{av}, \, 0 < \beta < 1,
\end{equation}
where $q_i^j$ is the $q_i$ value when the $j$-th packet arrives, calculated as the
amount of traffic stored in the $i$-th queue over the link capacity, and
$\beta$ is a gain factor. Updating $d_{av}$ on packet arrivals avoids the need
to record and store the arrival time of every packet to the system.}

The algorithm works as follows. Each link in the bundle is assumed to have its
own queue, so whenever a new packet arrives, the algorithm decides which queue
should store it. For this the expected delay is compared with the current
average delay. If $d_{av} < d_e$, the packet is stored in the queue of the
first link. For every other case, a sequential search is started for a queue
with a queue length smaller than the expected delay. If no queue is found, the
packet is stored in the last queue. This is all summarized in
Listing~\ref{lst:dynamic-algo}.
\lstinputlisting[float,label=lst:dynamic-algo,caption=Dynamic water-filling
algorithm.]{dynamic.txt}

\section{Results}
\label{sec:results}

We have carried out several experiments to assess the effectiveness of our
proposed sharing strategy. We have employed the ns-2 network simulator with an
added module for simulating IEEE~802.3az links available for download
at~\cite{14:_networ_simul}. The simulated bundles have a varying number of
10$\,$Gb$/$s links with 10GBASE-T interfaces, so $\Ts=2.88\,\mu$s,
$\Tw=4.48\,\mu$s and $\sigma_{\mathrm{off}}=0.1$, in accordance with several
estimates provided by different manufactures during the standardization
process of the IEEE~802.3az standard. For the \emph{burst transmission}
simulations we set up $\Tmax=100\,\mu$s and $\Qw=20\,$frames, so that
$\mu\Ts>3.6\,$frames, as recommended
in~\cite{herreria12:_gi_g_model_gb_energ_effic_ether}.

\subsection{Model Validation}
\label{sec:model-validation}

The first set of experiments tests all possible traffic sharing alternatives
in a simple 2-link bundle when it is fed with synthetic traffic. For the
experiments we used a fixed frame size of 1000$\,$bytes and a varying arrival
rate, so that the aggregated load ranged between 25 and 175\%. Then, for each
load we modified the share between the two links and, for each share, we run
five simulations with different random seeds and a ten seconds duration.

\begin{figure*}
  \centering
  \subfigure[res-2-link-poisson-frame][Frame
  transmission]{\label{fig:res-2-link-poisson-frame}\includegraphics[width=\columnwidth]{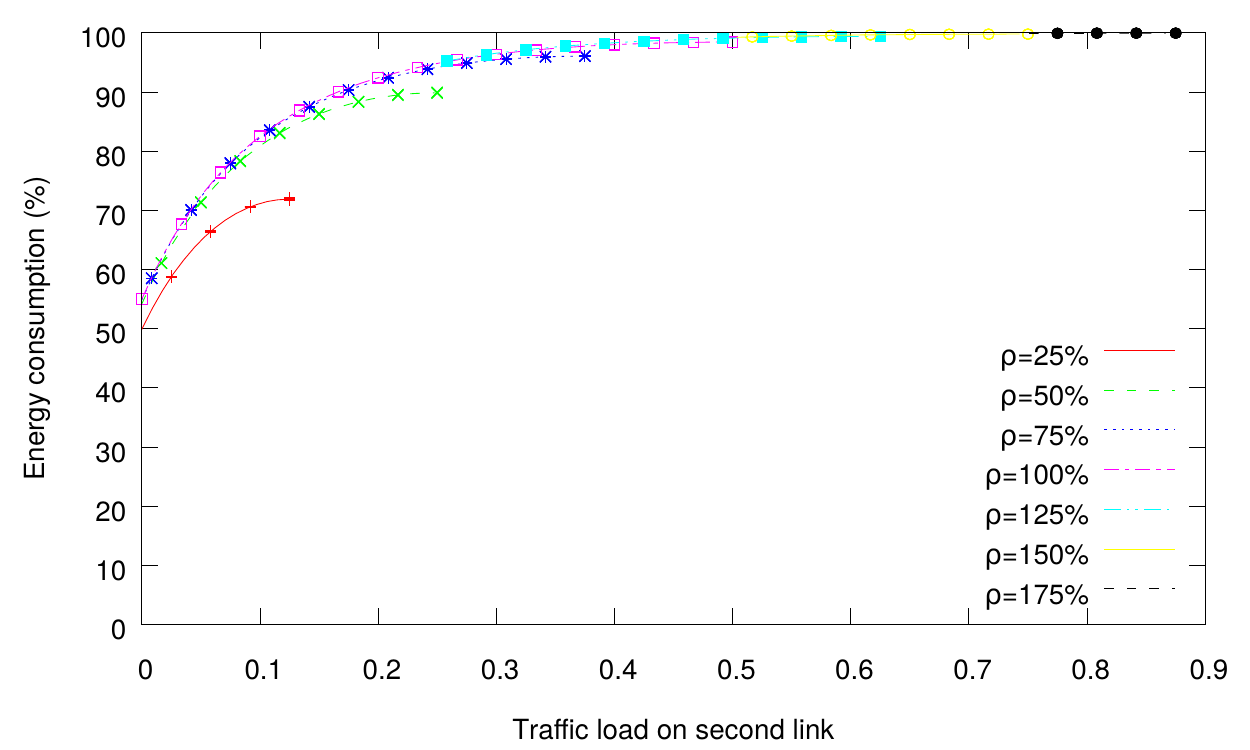}}
  \subfigure[res-2-link-poisson-burst][Burst
  transmission]{\label{fig:res-2-link-poisson-burst}\includegraphics[width=\columnwidth]{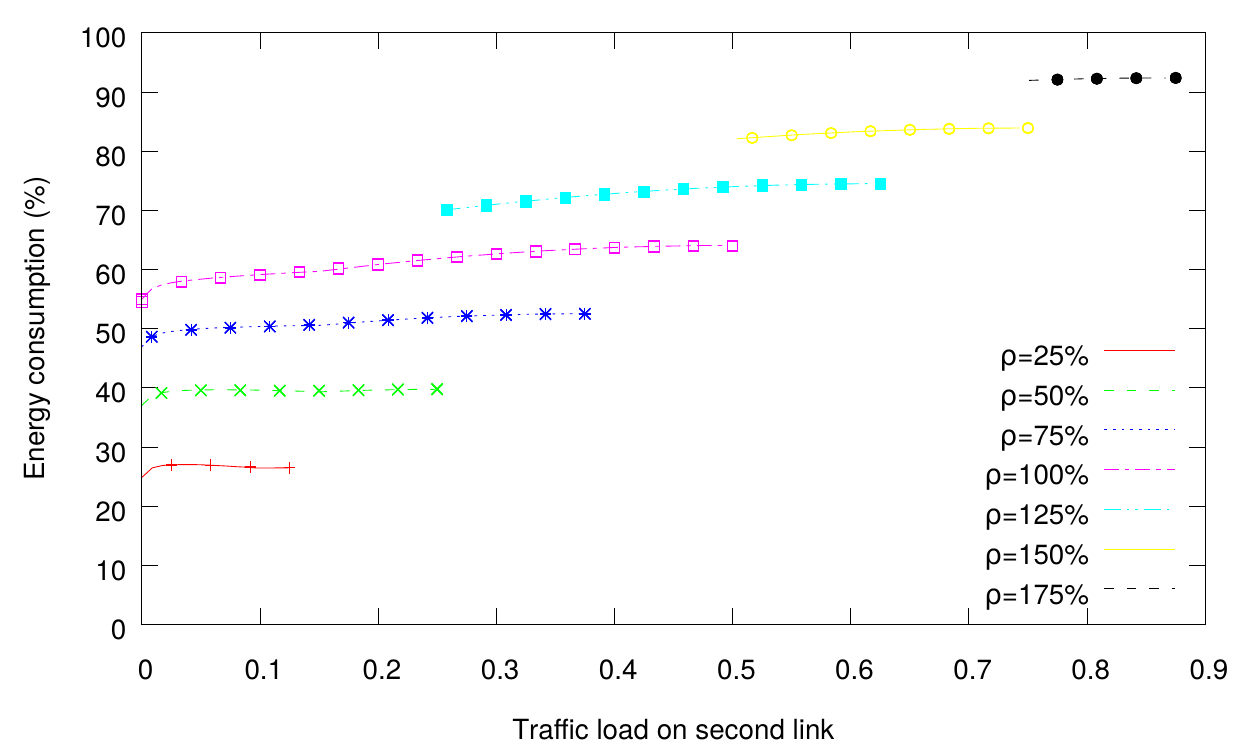}}
  \caption{Results for a 2-link bundle with Poisson traffic as a function of
    excess traffic load on the second link.}
  \label{fig:res-2-link-poisson}
\end{figure*}
Figure~\ref{fig:res-2-link-poisson} shows the total energy consumption of the
bundle versus the traffic load on the second link for Poisson traffic with
both the frame and burst transmission algorithms. For clarity, we take
advantage of the symmetry of the problem and only represent the results where
load on the second link is smaller than that on the first. \added{Thus, for
  each experiment the leftmost value represents the water-fill algorithm, with
  most of the traffic on the first link, while the rightmost value corresponds
  with an equal share of traffic among both links.}
Figure~\ref{fig:res-2-link-poisson} shows very clearly that there is very
little variance among the different simulations for the same share and load
and, at the same time, that the results match those provided by the model,
plotted with continuous lines in the graph. It is also easy to see the
increasing energy consumption with the traffic load on the second link. The
closer the loads of both links are, the higher the energy needs. In fact, the
minimum consumption is obtained when most load is concentrated on a single
link, as predicted. Finally, we also observe that the benefit of aggregating
load on a single link is much greater for frame than for burst transmission.
This is a consequence of the fact that the energy profile of the burst
transmission algorithm is more
linear\added{~\cite{christensen10:_the_road_to_eee}. Thus, there is less
  sensitivity to how the traffic load is shared among the links of the bundle
  ---note that total energy consumption shows little variations when the
  traffic share is modified in Fig.~\ref{fig:res-2-link-poisson-burst}---.}
Also, as expected, burst transmission needs less energy than frame
transmission.

\begin{figure*}
  \centering
  \subfigure[res-2-link-pareto-frame][Frame
  transmission]{\includegraphics[width=\columnwidth]{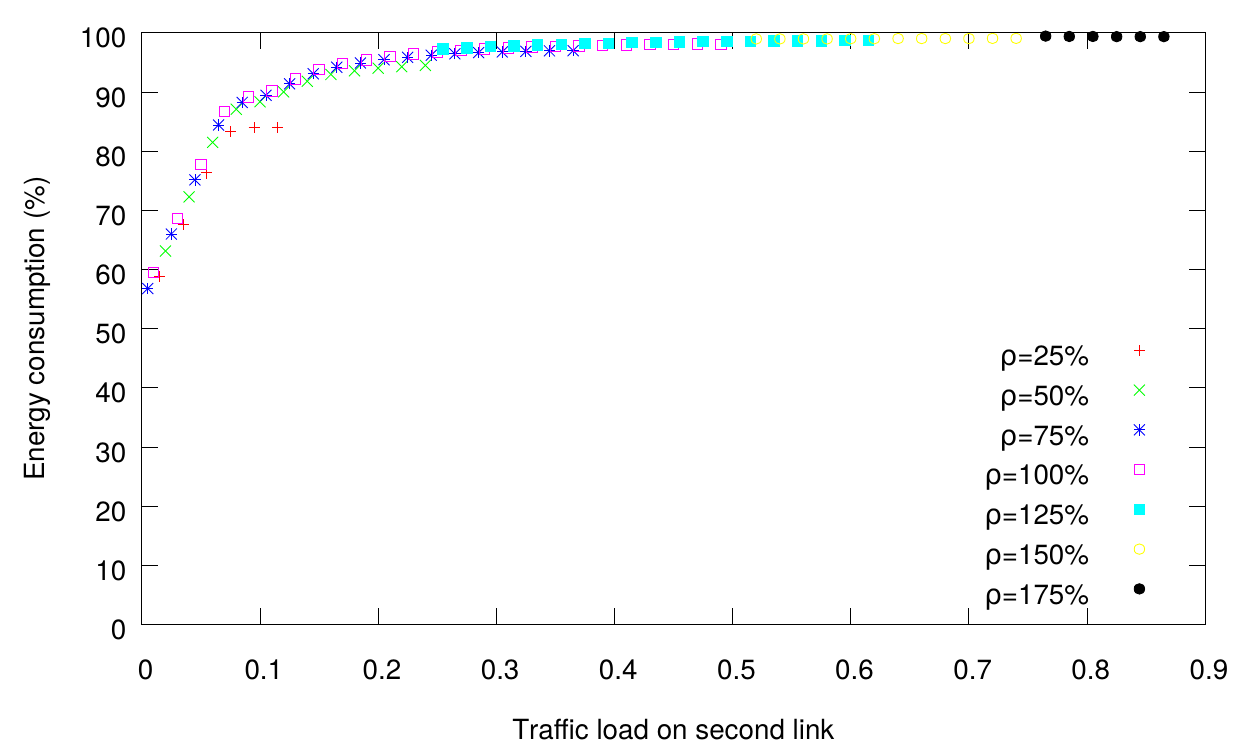}}
  \subfigure[res-2-link-pareto-burst][Burst
  transmission]{\includegraphics[width=\columnwidth]{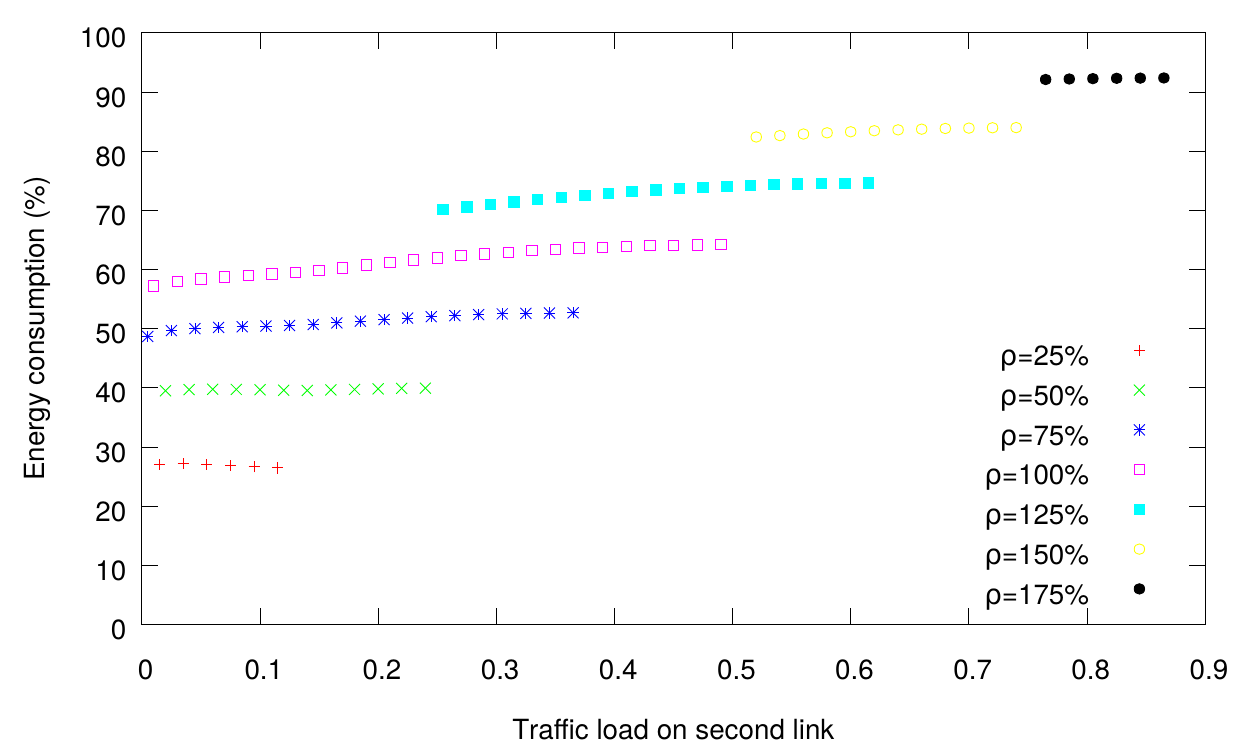}}
  \caption{\addedfragile
    {Results} for a 2-link bundle with Pareto traffic as a
    function of excess traffic load on the second link. Theoretical values
    omitted, as there is no closed form formula for Pareto
    arrivals~\cite{herreria12:_gi_g_model_gb_energ_effic_ether}.}
  \label{fig:res-2-link-pareto}
\end{figure*}
The results for Pareto traffic (with the shape factor $\alpha$ set to
$2.5$)\footnote{Pareto distributions must be characterized with a shape
  parameter $\alpha$ greater than 2 to have a finite variance. However, the
  greater the $\alpha$ parameter is, the shorter the fluctuations, so a value
  of $2.5$ is a good compromise to have finite variance along with significant
  fluctuations.} are plotted in \added{Fig.}~\ref{fig:res-2-link-pareto}. Although
the performance curves are not as smooth as for the Poisson traffic, the
previous conclusions still hold. Again, the minimum consumption is obtained
when most of the traffic is on a single link and then increases as the traffic
on the second link increases. At the same time, the frame transmission
algorithm benefits more than the burst transmission one.

Our second experiment compares the overall energy consumption of an Ethernet
bundle for the full range of possible incoming traffic demands and two
different sharing methods. The first spreads the traffic evenly across all the
constituent links, denoted in the results by \emph{equitable,} while the
second is the naïve water-filling method. Traffic follows a Poisson
distribution and the frame size is 1000$\,$bytes, as in the previous
experiment. Figure~\ref{fig:limits-exp} displays both the experimental and
analytic results for two, four and eight-link aggregates. Again, frame
transmission algorithm benefits more than burst transmission of the water-fill
sharing algorithm. Further, as the number of links in the bundle increases,
the energy demands of frame transmission, when using the water-fill procedure,
approximate those of burst transmission.

\subsection{Dynamic Water-filling Algorithm}
\label{sec:dynam-water-fill}

The next set of experiments tests the behavior of the dynamic water-filling
algorithm. We have employed real traffic traces captured on Internet backbones
for the simulations. The traffic comes from the publicly available passive
monitoring CAIDA dataset from
2013~\cite{internet13:_caida_ucsd_anony_inter_traces} which provides
anonymized traces from a 10$\,$Gb$/$s Internet backbone. We used one of these
traces to feed traffic to a simulated 4-link bundle made of 10GBASE-T
interfaces. Of all the available traces, we have chosen one with a relatively
high demand of about 6$\,$Gb$/$s on average. As that load is \added{still}
quite low for our simulated bundle of 40$\,$Gb$/$s we made new traces of
approximately 12, 18, 24 and 30$\,$Gb$/$s combining traffic from additional
independent adjacent traces\added{. For this we concatenated the traces and
  then reduced the inter-arrival times by a constant factor (2, 3, 4 and 5
  respectively). We proceeded in this manner to keep any existing
  auto-correlation in the final traces.} \added{Finally, we have chosen
  $\beta=0.1$ as the gain factor in~\eqref{eq:d-av-exp}.}
\begin{figure}
  \centering
  \subfigure[Two links bundle]{
    \includegraphics[width=\columnwidth]{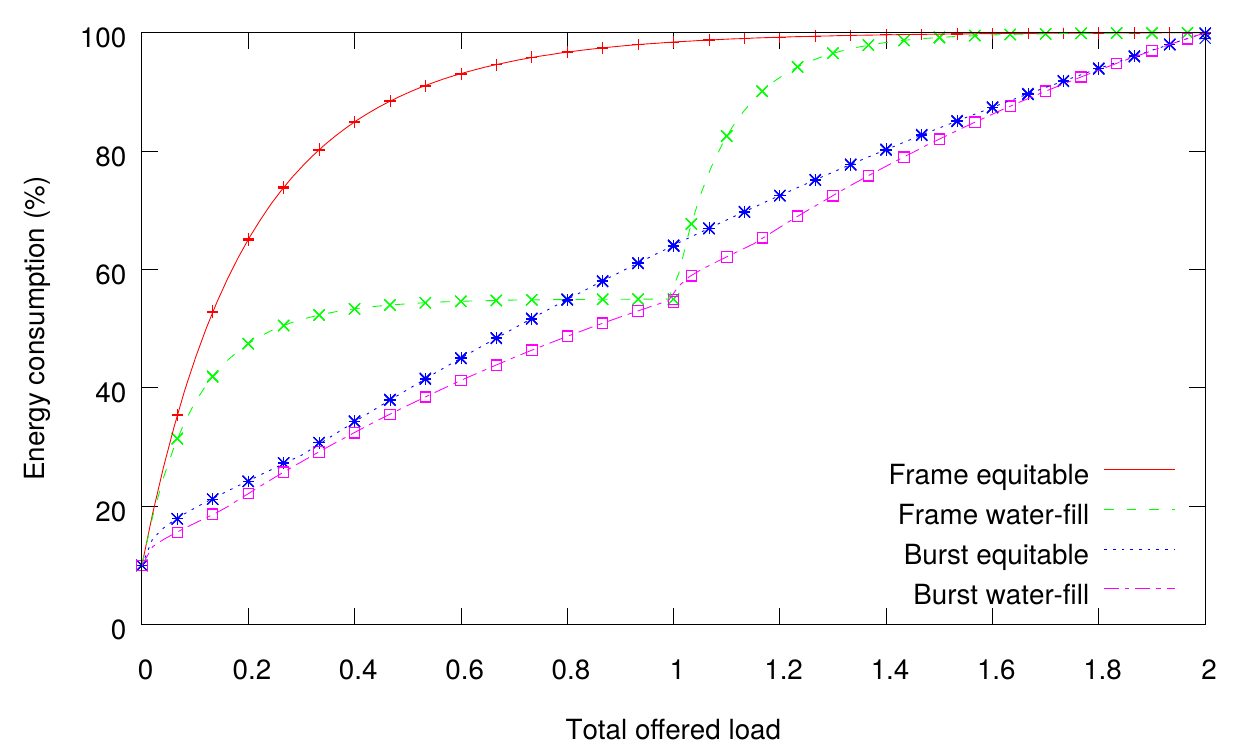}}
  \subfigure[Four links bundle]{
    \includegraphics[width=\columnwidth]{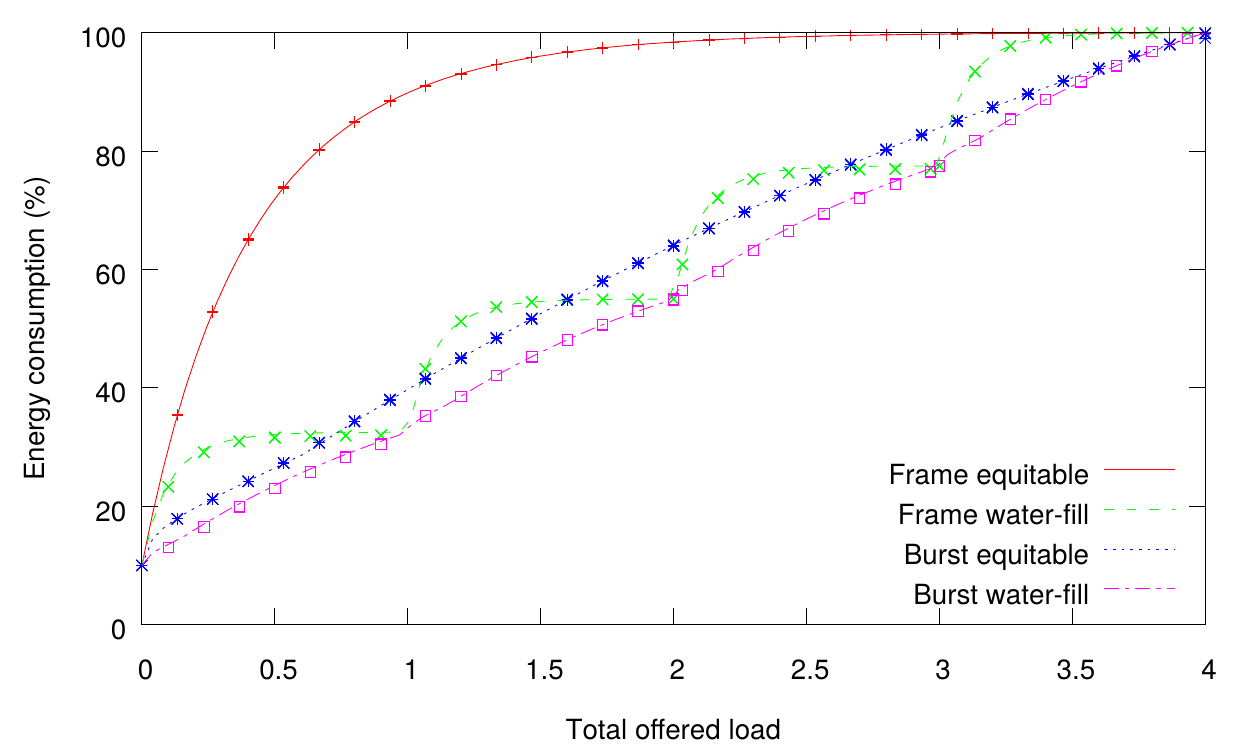}}
  \subfigure[Eight links bundle]{
    \includegraphics[width=\columnwidth]{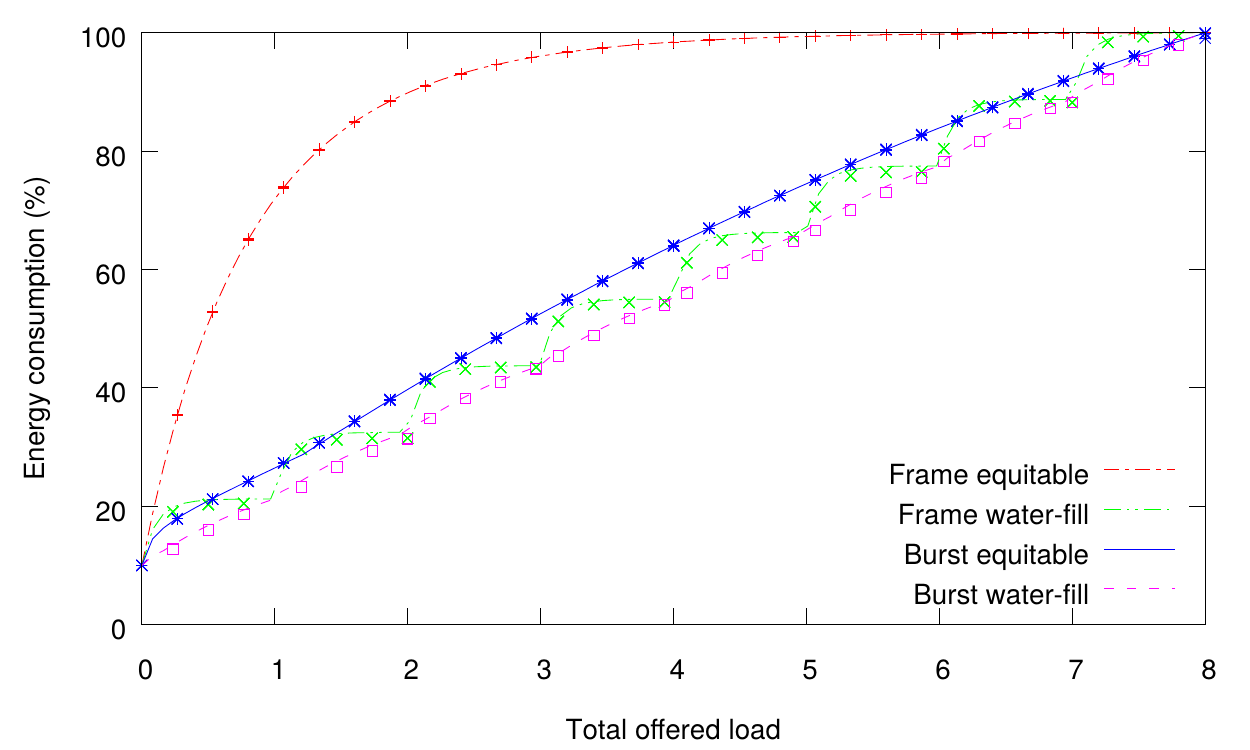}}
  \caption{\label{fig:limits-exp} Normalized global consumption of a bundle
    link for the different idle mode governors.}
\end{figure}

The first experiment verifies that the \added{proposed dynamic }algorithm is
in fact able to control the average delay. For this we have fed all the
traffic traces to a 4-link bundle, and configured the algorithm for different
expected delays. The results are plotted in
\added{Fig.}~\ref{fig:delayvsdelay}.\footnote{Results for burst transmission
  have been omitted for the sake of brevity, but show a similar behavior.} It
can be clearly seen an almost perfect relationship between the configured and
the measured average delay for values greater than the transition times of the
EEE links. The exception is the 6$\,$Gb$/$s trace, that is bounded below
4$\,\mu$s. This is expected, as the queue cannot grow larger when the capacity
of a single link is greater than the offered traffic. The simulation with the
12$\,$Gb$/$s trace shows a small drift of the average delay, but, in any case,
the average delay is kept below the configured delay. \added{This error in the
  12$\,$Gb$/$s experiment occurs because~\eqref{eq:d-av-exp} can overestimate
  average queuing delay if waiting time samples from low used queues are few
  and far between them, so the samples from the first queue get
  over-represented. Although omitted for brevity, decreasing the $\beta$ value
  lessens the drift.}
\begin{figure}
  \centering
  \includegraphics[width=.5\textwidth]{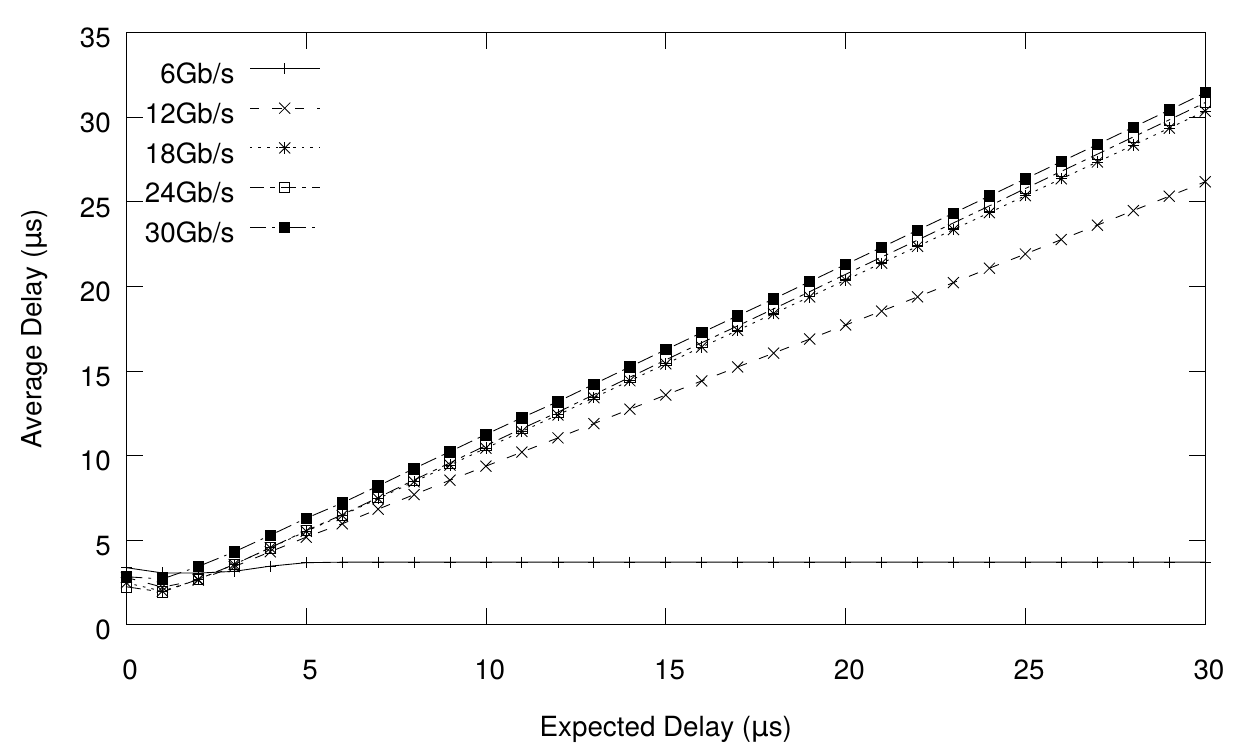}
  \caption{Obtained average delay versus configured delay for the dynamic water-fill algorithm.}
  \label{fig:delayvsdelay}
\end{figure}

The second experiment shows the variation of power consumption versus expected
delay. The results are shown in \added{Fig.}~\ref{fig:powervsdelay}. When the
expected delay is too low, all links are used simultaneously, and the power
savings are minimal. However, as the allowed delay increases, most of the
traffic is transmitted by the first links and, despite the fact that all of
them are powered on, we achieve large power savings thanks to the concavity of
the cost function. It is important to notice that the maximum energy savings
are already obtained starting from low delay target values. This allows to
deploy the algorithm even in networks used by delay-sensitive applications.
\begin{figure}
  \centering
  \includegraphics[width=.5\textwidth]{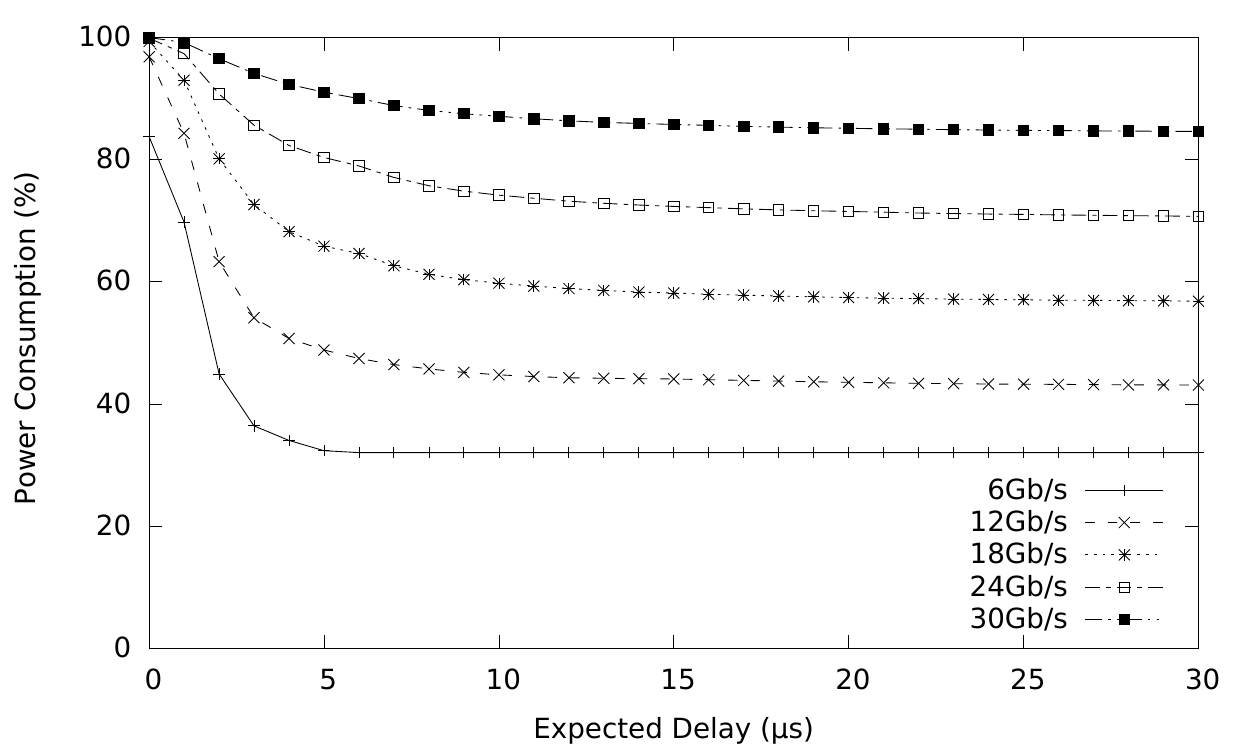}
  \caption{Power consumption versus expected delay for the dynamic water-fill algorithm.}
  \label{fig:powervsdelay}
\end{figure}

In the last experiment we have compared the results obtained when sharing the
traffic with three different strategies: spreading the traffic evenly across
the four links, that we called \emph{equitable}, a naïve implementation of the
\emph{water-fill} algorithm and, finally, the \emph{dynamic} water-fill
algorithm with a target delay of ten microseconds for the frame transmission
algorithm and 20$\,\mu$s for the burst one.\footnote{In burst transmission
  power savings reach their maximum for a higher delay value than frame
  transmission. This is expected as burst transmission adds additional delay
  in the form of queuing before waking up a link.} For the naïve
implementation we have constrained the traffic load on any link to 90\% to
avoid excessive buffering.

\begin{table}
  \centering
    \begin{tabular}{rlcccc}
      \multicolumn{1}{c}{Bundle} &
      \multicolumn{1}{c}{Strategy} & 
      \multicolumn{1}{c}{Link \#1} & 
      \multicolumn{1}{c}{Link \#2} & 
      \multicolumn{1}{c}{Link \#3} & 
      \multicolumn{1}{c}{Link \#4} \\\hline
      & Equit. & $1.55$ &$1.55$ &$1.55$ &$1.55$ \\
      $6.21$ & Naïve Water-fill & $6.21$ & 0 & 0 & 0 \\
      & Dyn. Frame & $6.21$ & 0 & 0 & 0\\    
      & Dyn. Burst & $6.18$ & $0.03$ & 0 & 0\\\hline    
      & Equit. & $3.15$ &$3.15$ &$3.15$ &$3.15$ \\
      $12.60$ & Naïve Water-fill & $9$ & $3.60$ & 0 & 0 \\
      & Dyn. Frame & $9.44$ & $3.08$ & $0.07$ & 0 \\
      & Dyn. Burst & $9.50$ & $2.67$ & $0.39$ & $0.04$\\\hline    
      & Equit. & $4.71$ &$4.70$ &$4.70$ &$4.70$ \\
      $18.81$ & Naïve Water-fill & $9$ & $9$ & $0.81$ & 0 \\
      & Dyn. Frame & $9.94$ & $7.12$ & $1.73$ & $0.02$ \\
      & Dyn. Burst & $9.97$ & $6.82$ & $1.77$ & $0.25$\\\hline     
      & Equit. & $6.27$ &$6.27$ &$6.27$ &$6.27$ \\
      $25.08$ & Naïve Water-fill & $9$ & $9$ & $7.08$ & 0 \\
      & Dyn. Frame & $10$ & $9.16$ & $5.12$ & $0.80$ \\
      & Dyn. Burst & $10$ & $9.13$ & $4.78$ & $1.17$\\\hline     
      & Equit. & $7.85$ &$7.85$ &$7.85$ &$7.85$ \\
      $31.40$ & Naïve Water-fill & $9$ & $9$ & $9$ & $4.4$ \\
      & Dyn. Frame & $10$ & $9.82$ & $7.94$ & $3.64$ \\
      & Dyn. Burst & $10$ & $9.83$ & $7.74$ & $3.83$\\\hline     
    \end{tabular}
  \caption{Average traffic fed into each link for the real traffic simulations
    (in Gb$/$s).}
  \label{tab:caida}
\end{table}
The exact traffic rate of each trace and the different shares are detailed in
Table~\ref{tab:caida}. For the equitable and the naïve water-fill they have
been determined beforehand, but for the dynamic algorithm the table lists the
results obtained via simulation. The results for both the frame transmission
and the burst transmission algorithms are depicted in
\added{Fig.}~\ref{fig:caida}.
\begin{figure}
  \centering
  \includegraphics[width=.5\textwidth]{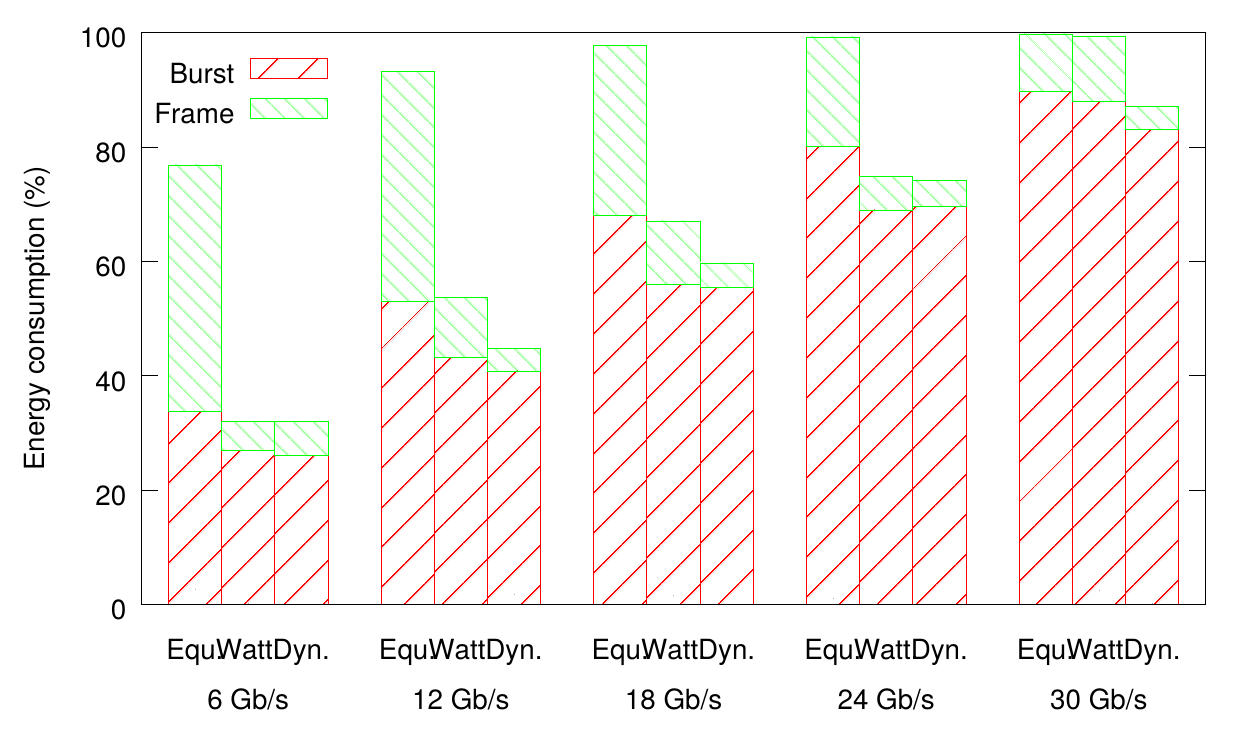}
  \caption{Energy consumption with real traffic traces when employing
    different strategies to share the traffic in a 4-link bundle.}
  \label{fig:caida}
\end{figure}
In every case the frame transmission algorithm needs more energy than the
burst transmission one, but, at the same time, the savings resulting from
applying the water-fill procedure are also greater. In fact, there is usually
very little difference in the consumption of both EEE algorithms in that case.
As expected, the equitable share draws more energy than the other two shares
and the water-fill share is the one that produces the best results. Finally,
the dynamic water-fill algorithm improves the results, but not substantially.
 
We have also measured the impact of the different algorithms on queuing delay.
Figure~\ref{fig:caida-delay} shows the average queuing delay suffered by the
traffic in the previous experiment.
\begin{figure}
  \centering
  \includegraphics[width=.5\textwidth]{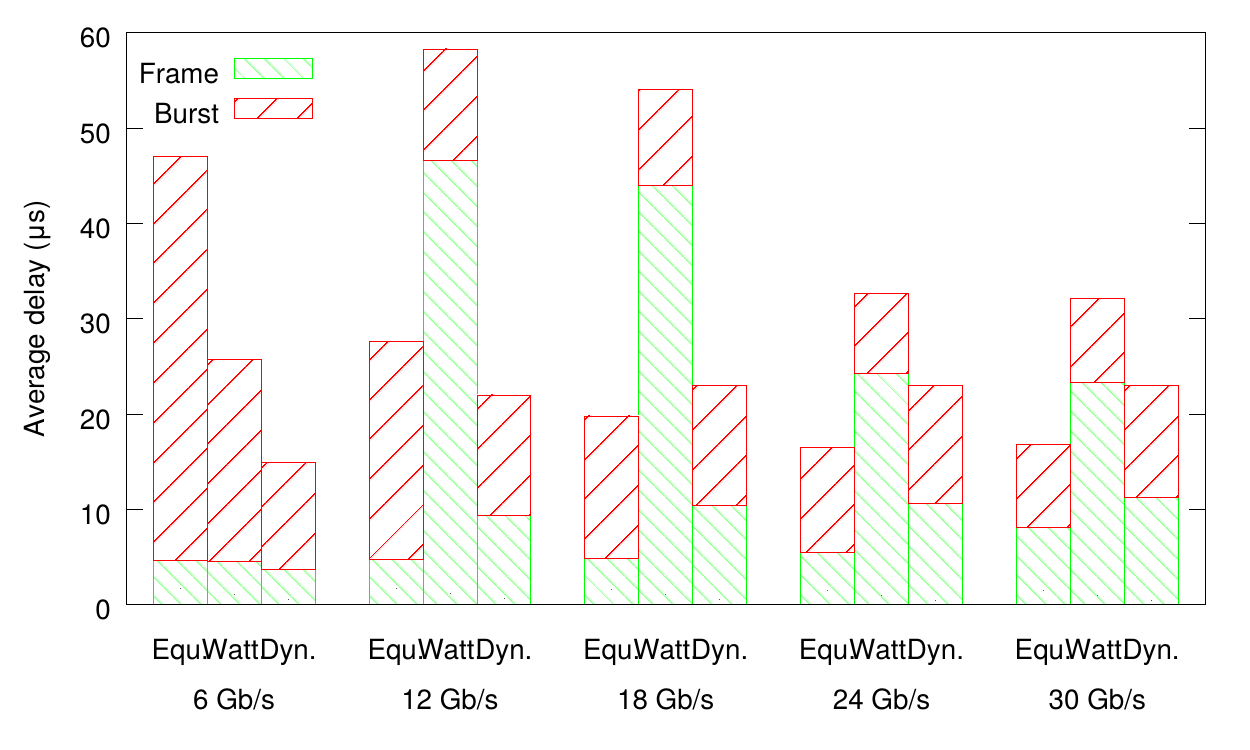}
  \caption{Average queuing delay with real traffic traces when employing
    different strategies to share the traffic in a 4-link bundle.}
  \label{fig:caida-delay}
\end{figure}
As it is the case for single EEE
links~\cite{herreria12:_gi_g_model_gb_energ_effic_ether}, we observe that
burst transmission always causes more delay than frame transmission. We also
find that the different sharing methods impact on the queuing delay
differently. \added{In the 6$\,$Gb$/$s case, the equitable share produces the
  highest delays with burst transmission, as there is relatively little
  traffic in every link, and thus packets have to wait for $Q_w$ packets to
  arrive before being transmitted. In every other case, the naïve water-fill
  algorithm produces the longest packet delays, as at least one queue is
  driven near its full capacity. Finally, the dynamic algorithm produces
  stable delays, near its target ---10$\,\mu$s for frame and $20\,\mu$s for
  burst transmission--- that are in the same range as those of the equitable
  share.}

\section{Conclusions}
\label{sec:conclusions}

This paper presents an optimum, yet simple, procedure for distributing traffic
load among the links of a bundle that minimizes energy consumption when
individual links employ an EEE algorithm. As explained, the maximum energy
savings are obtained when traffic is only transmitted on a link if all the
previous ones in the aggregate are already being used at their maximum allowed
load. The paper proves the optimality of the procedure for typical energy cost
functions of individual Ethernet links.

The provided procedure is oblivious of the energy saving algorithm used in the
links, whether it is the simple \emph{frame transmission} algorithm or the
more efficient \emph{burst transmission} one. Moreover, we found that as the
number of links forming the bundle increases, the difference in the total
energy consumption between both algorithms vanishes when using our sharing
procedure. Thus, for bundles made up of many links it is advisable to use the
simpler \emph{frame transmission} algorithm in the links, as it both reduces
complexity and adds less latency to the transmitted frames.

We have also explored several alternatives to build a practical implementation
of the water-filling idea to then present a simple practical implementation
that is able to keep average delay controlled at a configurable target value
while minimizing overall energy consumption. \added{The algorithm requires
  little memory and computational power, so that a vendor can implement it
  just by modifying the firmware of the Ethernet line card. However, as the
  algorithm needs to obtain the queue occupation of each port to classify
  incoming packets, an open-flow implementation is not currently possible, as
  the current spec~\cite{foundation13:_openf_switc_specif} does not define the
  needed counters. Future research could explore the possibility of extending
  the current spec to empower the user with the fine grained control of the
  transmission ports needed by our proposal.}

Finally, we have tested our procedure with both synthetic and real
traffic traces. In all cases, the obtained results match our expectations with
the best results being obtained when the proposed sharing algorithm is
employed, reducing energy consumption as much as $50\,$\%.

\appendices

\section{Proof of proposition~\ref{propo:waterfill}}
\label{sec:proof}

In this Section, we prove that for the particular case of equal cost functions
the solution to the optimum allocation is a simple sequential water-filling
algorithm: each link capacity is fully used before sending traffic through a
new, idle link. 

We assume $X < \sum_i C_i$, otherwise the solution is trivial. It is easy to
see that the constraints define a convex region $\mathcal{R}$. Since the
objective function is concave, it follows that it attains its minimum at some
of the extreme points of $\mathcal{R}$, namely $x_i=C_i$ for $i\in T \subseteq
[1:N],\quad 0 < x_j < C_j$ for one $j \in [1:N]$ and $x_k=0$ for all $k \in
T^C\setminus \{j\}$. In fact, when all the cost functions are equal, the
optimal traffic allocation is to use the links in decreasing order of
capacity. Assume, without loss of generality, that
$C_1>C_2>\ldots>C_N$.\footnote{If some links are of the same capacity, each
  permutation of the links lead to an equivalent solution of the problem.} Fix
two links $i$ and $j$, $i>j$, and assume that a feasible solution is the
vector $\mathbf{x}=(x_i^*,\ldots,x_N^*)$. Then, since $E$ is a concave
function it is also subadditive, and for $i>j$ and $\delta < \min\{x_i^*,
C_j-x_j^*\}$ we have
\begin{equation}
  \label{eq:proof-subadditive}
  E(x_i^*) + E(x_j^*) \ge E(x_i^*-\delta) + E(x_j^* + \delta).
\end{equation}

Therefore, the vector $\tilde{\mathbf{x}} = (x_1^*, \ldots, x_j^*+\delta,
\ldots, x_i^*-\delta, \ldots, x_N^*)$ is a better solution tha\added{n} $\mathbf{x}$.
Iterating this argument as many times as necessary, it is immediate to
conclude that
\begin{align}
  \label{eq:proof-results}
  x_i^*=C_i  \text{ for } i=1, \ldots, s-1\\
  0 \le x_s^*=X-\sum_{i=1}^{s-1}C_i<C_s\\
  x_j^* = 0 \text{ for } j=s+1, \ldots, N
\end{align}
is the optimal solution, where $\sum_{i=1}^{s-1}C_i \le X <
\sum_{i=1}^sC_i.\qed$

To see~\eqref{eq:proof-subadditive}, recall that for a concave function $f$
and three ordered points $a<b<c$ it holds
\begin{equation}
  \label{eq:proof-sub-first}
  \frac{f(b)-f(a)}{b-a} \ge \frac{f(c)-f(a)}{c-a}.
\end{equation}
Just let $t=(b-a)/(c-a)$ so that $b=(1-t)a+tc$.
By the definition of concavity
\begin{equation}
  \label{eq:proof-sub-second}
  f(b)\ge(1-t)f(a)+tf(c),
\end{equation}
which is~\eqref{eq:proof-sub-first}. Similarly, for $a<b<c$
\begin{equation}
  \label{eq:proof-sub-third}
  \frac{f(c)-f(a)}{c-a}\ge\frac{f(c)-f(b)}{c-b},
\end{equation}
and combining~\eqref{eq:proof-sub-first} and~\eqref{eq:proof-sub-third} gives
\begin{equation}
  \label{eq:proof-sub-last}
  \frac{f(b)-f(a)}{b-a}\ge\frac{f(c)-f(b)}{c-b}.
\end{equation}
Now, use inequality~\eqref{eq:proof-sub-last} twice over the tuples
$(x_i-\delta, x_i, x_j)$ and $(x_i, x_j, x_j+\delta)$ to
conclude~\eqref{eq:proof-subadditive}.\qed

\section{Proof of proposition~\ref{prop:convex}}
\label{sec:proof-proposition-convex}

Consider the auxiliary function
\begin{equation}
  u(x) = 1 - t(x) = a(1 - x)\frac{f(x)}{f(x) + b} = g(x) h(x),
\end{equation}
where $g(x) \triangleq a (1 - x)$ and $h(x) \triangleq f(x) / (f(x) + b)$.
Strict concavity of $t(x)$ is equivalent to $u(x)$ being strictly convex or,
alternatively, to $u^{\prime\prime}(x) > 0$. Taking the second derivative of
$u(x)$ we get
\begin{equation}
  \label{eq:convexity}
  u^{\prime\prime}(x) = g(x) h^{\prime\prime}(x) - 2 a h^\prime(x),
\end{equation}
because $g^\prime(x) = -a$. So, $u(x)$ is strictly convex if and only if $g(x)
h^{\prime\prime}(x) > 2 a h^\prime(x)$. But
\begin{equation}
  \label{eq:hprime}
  h^\prime(x) = b \frac{f^\prime(x)}{\bigl( f(x) + b \bigr)^2} < 0,
\end{equation}
since we assumed $f(x)$ to be decreasing. With $g(x) > 0$ for $x \in [0, 1]$,
$a > 0$ and~\eqref{eq:hprime}, \eqref{eq:convexity} shows that $h(x)$ convex
implies $u(x)$ convex. Finally,
\begin{equation}
  h^{\prime\prime}(x) = b \frac{f^{\prime\prime}(x) \bigl( f(x) + b \bigr) -
    2 \bigl( f^\prime(x) \bigr)^2}{\bigl( f(x) + b \bigr)^3},
\end{equation}
and $h(x)$ is a convex function ---$t(x)$ is a concave function--- if and only
if $f^{\prime\prime}(x) (f(x) + b) > 2 (f^\prime(x))^2$, since $f(x)$ is
nonnegative.\qed

\balance
\bibliographystyle{IEEEtran}
\bibliography{IEEEfull,biblio}

\end{document}